\newtheorem{thm}{Theorem}[section]
\newtheorem{lemma}[thm]{Lemma}
\theoremstyle{definition}
\newtheorem{definition}{Definition}[section]
\theoremstyle{remark}
\newcommand{\arrayproduct}{\mathbin{{\oplus}.{\otimes}}}
\newcommand{\Eout}{\mathbf{E}_\mathrm{out}}
\newcommand{\Ein}{\mathbf{E}_\mathrm{in}}
\newcommand{\Kout}{V} 
\newcommand{\Kin}{V} 
\newcommand{\vertexset}{V} 
\DeclareMathOperator{\size}{size}
\newcommand{\nullrowlabel}{\!\!\!\!\!\!}
\newcommand{\tightrowlabel}{\!\!\!\!\!}
\def\BibTeX{{\rm B\kern-.05em{\sc i\kern-.025em b}\kern-.08em
    T\kern-.1667em\lower.7ex\hbox{E}\kern-.125emX}}
\begin{document}

\title{Algebraic Conditions on One-Step Breadth-First Search}

\author{\IEEEauthorblockN{Emma Fu, Hayden Jananthan, Jeremy Kepner}
\IEEEauthorblockA{\textit{Massachusetts Institute of Technology}}
}

\IEEEoverridecommandlockouts
\IEEEpubid{\makebox[\columnwidth]{979-8-3503-0965-2/23/\$31.00~\copyright2023 IEEE \hfill} \hspace{\columnsep}\makebox[\columnwidth]{ }}

\maketitle
\IEEEpubidadjcol

\begin{abstract}
    The GraphBLAS community has demonstrated the power of linear algebra-leveraged graph algorithms, such as matrix-vector products for breadth-first search (BFS) traversals. This paper investigates the algebraic conditions needed for such computations when working with directed hypergraphs, represented by incidence arrays with entries from an arbitrary value set with binary addition and multiplication operations. Our results show the one-step BFS traversal is equivalent to requiring specific algebraic properties of those operations. Assuming identity elements $0, 1$ for operations, we show that the two operations must be zero-sum-free, zero-divisor-free, and $0$ must be an annihilator under multiplication. Additionally, associativity and commutativity are shown to be necessary and sufficient for independence of the one-step BFS computation from several arbitrary conventions. These results aid in application and algorithm development by determining the efficacy of a value set in computations.
\end{abstract}

\begin{IEEEkeywords}
hypergraph, incidence array, semiring, breadth-first search
\end{IEEEkeywords}

\section{Introduction}
\label{introduction section}

\let\thefootnote\relax\footnotetext{Research was sponsored by the United States Air Force Research Laboratory and the Department of the Air Force Artificial Intelligence Accelerator and was accomplished under Cooperative Agreement Number FA8750-19-2-1000. The views and conclusions
contained in this document are those of the authors and should not be interpreted as representing the official policies, either expressed or implied, of the Department of the Air Force or the U.S. Government. The U.S. Government is authorized to reproduce and distribute reprints for Government purposes notwithstanding any copyright notation herein.}

With the increasingly large amount of data being processed today, hypergraphs are becoming an extremely relevant tool for data analysis \cite{guo2023lahyper}. They can be used to model many real-world relationships, like social networks, the web, or biological processes \cite{lee2021realworld, klamt2009biology}. The representation of graphs by way of adjacency and incidence matrices is historically well-established \cite{kirchhoff1847ueber,poincare1900second}, but the GraphBLAS \cite{graphblas2022} community has particularly championed the computational and algorithmic power of graph algorithms expressed in the language of linear algebra by way of the use of adjacency and incidence matrices. An additional aspect of GraphBLAS's approach is the support of many different semirings beyond the standard real number and complex number algebras \cite{mattson2013standards,kepner2017graphblas}, structures of the form $(\mathbb{V}, \oplus, \otimes, 0, 1)$ satisfying some additional algebraic conditions. 

A central observation of the GraphBLAS approach to graph algorithms in the language of linear algebra is that vector-matrix multiplication represents a one-step breadth-first search (BFS) traversal of a graph, i.e., computes the frontier vertices from a given subset of vertices \cite{kepner2011graph}. For example, \cite{guo2023lahyper} shows that for undirected hypergraphs,
\begin{equation} \label{linalgbfs}
    \mathbf{v} \mapsto (\mathbf{v} \arrayproduct \Eout^\intercal) \arrayproduct \Ein
\end{equation}
(with $\oplus = \mathbin{\mathrm{OR}}$ and $\otimes = \mathbin{\mathrm{AND}}$ on $\{0, 1\}$) computes a one-step BFS traversal of the hypergraph, where $\Eout = \Ein$ is an incidence matrix representation of the hypergraph. Note the use of incidence matrices---adjacency matrices are insufficient for representing hypergraphs as well as multigraphs (with distinguishable edges) \cite{kepner2015reasons}. 

While \cite{guo2023lahyper} shows \eqref{linalgbfs} (which we term 'LinAlgBFS' from this point onward) is valid when working over the Boolean value set $\{0, 1\}$, it suggests the more general question: under what conditions on the value set $(\mathbb{V}, \oplus, \otimes, 0, 1)$ is LinAlgBFS valid with respect to $(\mathbb{V}, \oplus, \otimes, 0, 1)$, i.e., properly compute a one-step BFS traversal? For the purposes of this paper, we focus solely on directed hypergraphs.

The remainder of \S\ref{introduction section} consists of necessary definitions, notation, and conventions.
\S\ref{linalgbfs validity section} establishes precise algebraic conditions necessary and sufficient for LinAlgBFS to be valid (Theorem~\ref{t2.1}).
\S\ref{convention independence section} establishes necessary and sufficient algebraic conditions (Theorems~\ref{associativity theorem}, \ref{multiplication commutativity theorem}, \& \ref{addition commutativity theorem}) for LinAlgBFS to be independent of several conventions implicit in LinAlgBFS---the ordering of vertices and hyperedges when computing array products, the use of row versus column vectors,
and the computation of $\mathbf{v} \arrayproduct \Eout^\intercal$ versus $\Eout^\intercal \arrayproduct \Ein$ first.

\subsection{Definitions, Notation, \& Conventions}

For the remainder of the paper, $G = (\vertexset, K)$ denotes a directed hypergraph with hyperedge set $K$ and vertex set $\vertexset$. We assume there are fixed total orderings of $V$ and $K$.

We work with associative arrays to better capture the role of vertices and hyperedges in our arrays. The standard reference is \cite{kepner2018bigdata}, though we follow the approach of \cite{jananthan2017construct} in which the only prior assumptions made about $\oplus, \otimes$ (``addition'', ``multiplication'') is that they have identity elements $0, 1 \in \mathbb{V}$, respectively.

\begin{definition}[associative array]
    An \emph{associative array} is a map $\mathbf{A} \colon K_1 \times K_2\rightarrow \mathbb{V}$ where $K_1, K_2$ are finite, totally ordered key sets and $\mathbb{V}$ is the underlying set of the value set $(\mathbb{V}, \oplus, \otimes, 0, 1)$.

    $\mathbf{A}$ is a \emph{row vector} if $\size(K_1) = 1$ and a \emph{column vector} if $\size(K_2) = 1$.
\end{definition}

\begin{definition}[array product]
    Given $\mathbf{A} \colon K_1 \times K_2 \to \mathbb{V}$ and $\mathbf{B} \colon K_2 \times K_3 \to \mathbb{V}$, their \emph{array product} $\mathbf{C} = \mathbf{A} \arrayproduct \mathbf{B} \colon K_1 \times K_3 \to \mathbb{V}$ is defined for $(k_1, k_3) \in K_1 \times K_3$ by
    \begin{equation*}
        \mathbf{C}(k_1, k_3) = \bigoplus_{k_2 \in K_2}{\mathbf{A}(k_1, k_2) \otimes \mathbf{B}(k_2, k_3)}.
    \end{equation*}
\end{definition}

As $\oplus$ is not assumed associative or commutative, the iterated operator $\bigoplus_{k_2 \in K_2}$ is interpreted using right associativity (e.g., $u \oplus v \oplus w = u \oplus (v \oplus w)$) and ordered according to the total ordering of $K_2$.

\begin{definition}[{out-, in-}incidence arrays]
    An associative array $\Eout \colon \Kout \times K \rightarrow \mathbb{V}$ is an \emph{out-incidence array} if for all hyperedges $k$ and vertices $a$, $\Eout(k, a) \neq 0$ if and only if $a$ is an initial vertex of $k$.
    
    An associative array $\Ein \colon \Kin \times K  \rightarrow \mathbb{V}$ is an \emph{in-incidence array} if for all hyperedges $k$ and vertices $a$, $\Ein(k, a) \neq 0$ if and only if $a$ is a terminal vertex of $k$.
\end{definition}

\section{Algebraic Criteria for LinAlgBFS Validity}
\label{linalgbfs validity section}

To further clarify the algebraic requirements for the validity of LinAlgBFS, we break LinAlgBFS into two steps, each of which have graph-theoretic interpretations. Let $\mathbf{e} = \mathbf{v} \arrayproduct \Eout^\intercal$ and $\mathbf{w} = \mathbf{e} \arrayproduct \Ein$. Consider the following statements:
\begin{equation}
\label{eq1}
    \begin{split}
    & \bigoplus\limits_{a\in \Kout}{\mathbf{v}(a) \otimes \Eout^\intercal(a,k)} = 0 ~\text{if and only if}~ \\ 
    & \text{$\Eout^\intercal(a,k) = 0$ or $\mathbf{v}(a) = 0$ for all $(a, k) \in \Kout \times K$}
    \end{split} \tag{$\ast$}
\end{equation}
(I.e., $\mathbf{e} = \mathbf{v} \arrayproduct \Eout^\intercal$ should indicate exactly the hyperedges having an initial vertex indicated in $\mathbf{v}$.)

\begin{equation}
\label{eq2}
    \begin{split}
    & \bigoplus\limits_{k\in K}{\mathbf{e}(k) \otimes \Ein(k,a)} = 0 ~\text{if and only if}~ \\ 
    & \text{$\Ein(k,a) = 0$ or $\mathbf{e}(k) = 0$ for all $(a, k) \in \Kin \times K$}
    \end{split} \tag{$\dagger$}
\end{equation}
(I.e., $\mathbf{w} = \mathbf{e} \arrayproduct \Ein$ should indicate exactly the terminal vertices of the hyperedges indicated in $\mathbf{e}$.)

\begin{thm}
\label{t2.1}
    Given a value set $(\mathbb{V}, \oplus, \otimes, 0, 1)$, the following are equivalent.
    \begin{enumerate}[(i)]
        \item \eqref{eq1} is valid with respect to $(\mathbb{V}, \oplus, \otimes, 0, 1)$ for any directed hypergraph $G$.
        \item \eqref{eq2} is valid with respect to $(\mathbb{V}, \oplus, \otimes, 0, 1)$ for any directed hypergraph $G$.
        \item LinAlgBFS is valid with respect to $(\mathbb{V}, \oplus, \otimes, 0, 1)$ for any directed hypergraph $G$.
        \item $(\mathbb{V}, \oplus, \otimes, 0, 1)$ satisfies the following algebraic conditions. For all $v, w \in \mathbb{V}$:
        \begin{enumerate}[(I)]
            \item Zero-sum-free: $v \oplus w = 0 \implies v = w = 0$.
            \item Zero-divisor free: $v \otimes w = 0 \implies v = 0 \text{ or } w = 0$.
            \item Zero annihilates: $v \otimes 0 = 0 \otimes v = 0$.
        \end{enumerate}
    \end{enumerate}
\end{thm}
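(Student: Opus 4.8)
Because (iv) is a single explicit list of algebraic conditions, the plan is not a literal cyclic chain but rather to show that (iv) implies each of (i), (ii), (iii), and that when (iv) fails each of (i), (ii), (iii) fails as well. The forward direction is a routine unwinding of the iterated $\oplus$; the reverse direction is where the work lies, since it requires producing small directed hypergraphs on which the relevant statement demonstrably fails.

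For (iv) $\Rightarrow$ (i): fix a directed hypergraph $G$, a row vector $\mathbf{v}$ over $V$, and a hyperedge $k$, and examine $\mathbf{e}(k) = \bigoplus_{a} \mathbf{v}(a) \otimes \Eout^\intercal(a,k)$. For the ``if'' half of $(\ast)$, if every $a$ satisfies $\mathbf{v}(a) = 0$ or $\Eout^\intercal(a,k) = 0$, then every summand is $0$ by condition (III), and an iterated $\oplus$ of $0$'s equals $0$ since $0$ is the $\oplus$-identity, so $\mathbf{e}(k) = 0$. For the ``only if'' half, if $\mathbf{e}(k) = 0$ then walking down the right-associated sum and applying condition (I) at each step forces every summand $\mathbf{v}(a) \otimes \Eout^\intercal(a,k)$ to vanish, whence condition (II) gives $\mathbf{v}(a) = 0$ or $\Eout^\intercal(a,k) = 0$ for each $a$. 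The proof of (iv) $\Rightarrow$ (ii) is identical with $(\mathbf{v}, \Eout^\intercal)$ replaced by $(\mathbf{e}, \Ein)$ and the sum taken over hyperedges. Then (iv) $\Rightarrow$ (iii) follows by composing these: $(\ast)$ says $\mathbf{e}$ flags exactly the hyperedges with an initial vertex flagged by $\mathbf{v}$, $(\dagger)$ says $\mathbf{w}$ flags exactly the terminal vertices of hyperedges flagged by $\mathbf{e}$, and together these describe precisely the one-step BFS frontier, so $\mathbf{w}$ is the correct output.

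For the converse I would assume some condition fails and build counterexamples, handling the cases in the order: (III) fails; then (II) fails while (III) holds; then (I) fails while (II) and (III) hold. This ordering lets the later cases use the annihilation and zero-divisor laws to keep stray summands from masking the intended error. If (III) fails, take a witness $v \otimes 0 \ne 0$ or $0 \otimes v \ne 0$ and form a single-hyperedge hypergraph in which a vertex that is, respectively is not, an initial vertex of $k$ carries $0$, respectively $v$, in $\mathbf{v}$; then $\mathbf{e}(k) \ne 0$ although $(\ast)$ demands $\mathbf{e}(k) = 0$. If (II) fails, a witness $v \otimes w = 0$ with $v, w \ne 0$ gives a hypergraph with one hyperedge $k$ having a single initial vertex $a$, with $\mathbf{v}(a) = v$ and $\Eout^\intercal(a,k) = w$, so $\mathbf{e}(k) = 0$ though $a$ is flagged and initial. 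If (I) fails, note first that $v \oplus w = 0$ with $(v,w) \ne (0,0)$ already forces $v, w \ne 0$ (since $v \oplus 0 = v$ and $0 \oplus w = w$); then a hyperedge with two initial vertices carrying $\mathbf{v}$-entries $v$ and $w$ and $\Eout^\intercal$-entries $1$ has $\mathbf{e}(k) = v \oplus w = 0$ despite flagged initial vertices. Each of these breaks $(\ast)$; the mirror constructions --- using a ``selector'' initial vertex flagged with $1$ so that $\mathbf{e}$ takes prescribed values on the chosen hyperedges --- break $(\dagger)$; and each extends, by giving the offending hyperedge a private terminal vertex with no other in-edge, to a hypergraph on which LinAlgBFS returns the wrong frontier, so (i), (ii), and (iii) all fail too.

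The step I expect to be the main obstacle is the bookkeeping in this last part: checking that each gadget is a genuine directed hypergraph with exactly the declared initial and terminal incidences (and with nonempty tails and heads, padding with auxiliary vertices if the hypergraph model requires it), that no accidental cancellation inside the right-associated iterated $\oplus$ repairs the wrong value, and that the discrepancy created in $\mathbf{e}$ genuinely survives the second array product into $\mathbf{w}$ --- which is exactly why it helps to resolve the failure of (III) first so that the annihilation law is available afterward.
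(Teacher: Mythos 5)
Your proposal follows essentially the same strategy as the paper: the sufficiency direction is the same routine unwinding of the right-associated $\bigoplus$ using zero-sum-freeness, zero-divisor-freeness, annihilation, and the identity laws (the paper's Lemmas~\ref{l2.5} and \ref{l2.10}), and your necessity gadgets for zero-sum-freeness and zero-divisor-freeness are literally the hypergraphs of Lemmas~\ref{l2.2} and \ref{l2.3} (Figures~\ref{f1}.1--\ref{f1}.2), with the same observation that $0$ being the $\oplus$-identity handles the case where exactly one of $v,w$ is $0$. Your passage from a broken $(\ast)$ to a broken LinAlgBFS via a terminal vertex with in-incidence value $1$ is also the paper's device.

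The one place where you genuinely diverge from the paper is the order in which the three necessity cases are resolved, and as written your annihilator case has a small hole. You treat ``(III) fails'' first and claim the gadget gives $\mathbf{e}(k)\neq 0$ directly. That works for a witness $0\otimes v\neq 0$ with $v\neq 0$ (one vertex, one summand), but for a witness $v\otimes 0\neq 0$ the offending vertex is \emph{not} an initial vertex of $k$, so $k$ must have some other initial vertex $b$, contributing a second summand $\mathbf{v}(b)\otimes\Eout^\intercal(b,k)=0\otimes(\text{nonzero})$ whose value you cannot evaluate at that stage; you therefore cannot conclude the two-term sum is nonzero. The fix is a short subcase split (if $0\otimes w\neq 0$ for some $w\neq 0$ you are in the first, single-vertex case; otherwise the extra summand is $0$ and the $\oplus$-identity law finishes), but note that the paper sidesteps this entirely by reversing your ordering: it establishes zero-sum-freeness \emph{first} (Lemma~\ref{l2.2}), then for the annihilator gadget (Figure~\ref{f1}.3) argues from the direction of $(\ast)$ that forces $(0\otimes v)\oplus(0\otimes 0)\oplus(v\otimes 0)=0$, and uses zero-sum-freeness to split that zero sum into the three individual annihilations. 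Your stated rationale for doing (III) first --- making annihilation available for the later cases --- is not actually needed, since the zero-sum and zero-divisor gadgets are single-summand or all-nonzero-entry sums; so you would lose nothing, and gain the paper's cleaner argument, by doing (I) first.
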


\begin{figure} 
    \begin{center} 
    \includegraphics[scale=0.6]{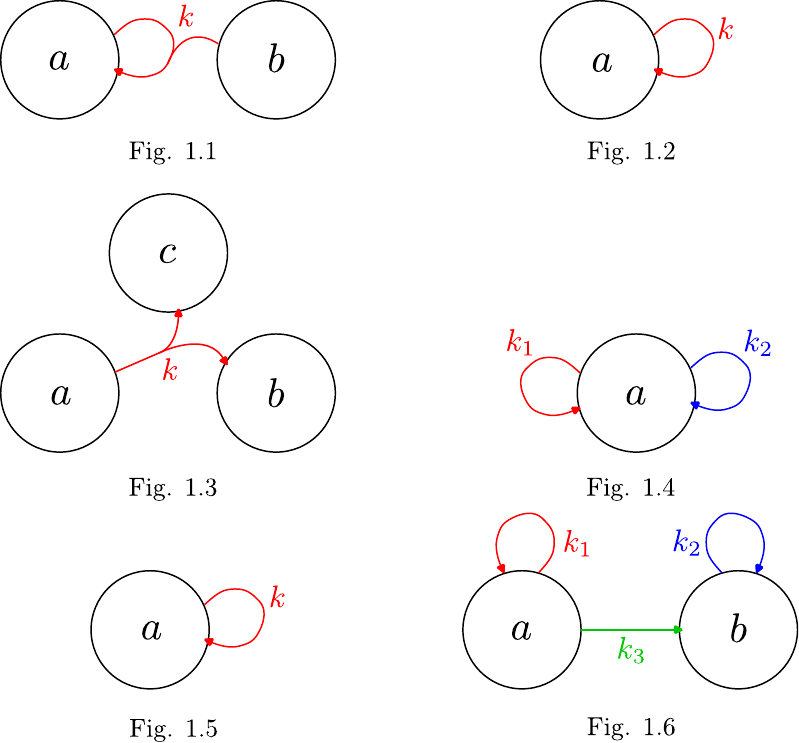}
    \caption{Explicit directed hypergraphs used to prove necessity of zero-sum-freeness, zero-divisor-freeness, and $0$ an annihilator for $\otimes$ for \eqref{eq1} and \eqref{eq2}, respectively.\label{f1}}     
    \end{center} 
\end{figure} 

\begin{proof}
    We start by showing \textit{(i)} implies \textit{(iv)}.

    \begin{lemma} 
        \eqref{eq1} $\implies$ $\mathbb{V}$ is zero-sum-free.
    \label{l2.2}
    \end{lemma}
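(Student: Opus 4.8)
The plan is to prove the statement directly: assuming $v \oplus w = 0$ for some $v, w \in \mathbb{V}$, I would exhibit one small directed hypergraph $G$ together with an out-incidence array and an input vector for which the frontier value $\mathbf{e}(k)$ at a single hyperedge $k$ equals exactly $v \oplus w$; then \eqref{eq1}, applied in the ``only if'' direction, immediately forces $v = w = 0$. The whole point is choosing the gadget and---crucially---the incidence entries so that the two-term array product collapses to $v \oplus w$ without having to know anything yet about how $0$ behaves under $\otimes$.

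Concretely, let $G$ have two vertices $V = \{a_1, a_2\}$ with $a_1 < a_2$ and a single hyperedge $k$ whose set of initial vertices is $\{a_1, a_2\}$ (see Figure~\ref{f1}); the terminal vertices of $k$ are irrelevant here since only $\mathbf{e} = \mathbf{v} \arrayproduct \Eout^\intercal$ is involved. Take the out-incidence array with $\Eout^\intercal(a_1, k) = \Eout^\intercal(a_2, k) = 1$, which is a valid out-incidence array for $G$ precisely because $1 \neq 0$, and set $\mathbf{v}(a_1) = v$ and $\mathbf{v}(a_2) = w$. Since the iterated sum over the two-element set $V$ has only two terms and $1$ is the identity for $\otimes$, we get $\mathbf{e}(k) = (v \otimes 1) \oplus (w \otimes 1) = v \oplus w = 0$. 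Feeding $\mathbf{e}(k) = 0$ into \eqref{eq1} shows that $\Eout^\intercal(a, k) = 0$ or $\mathbf{v}(a) = 0$ for every vertex $a$; since $\Eout^\intercal(a_1, k) = \Eout^\intercal(a_2, k) = 1 \neq 0$, this gives $v = \mathbf{v}(a_1) = 0$ and $w = \mathbf{v}(a_2) = 0$. Only the forward direction of the biconditional \eqref{eq1} is used.

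The one genuine subtlety---and the ``main obstacle'' in the sense that it is the only place the argument can go wrong---is the implicit use of $0 \neq 1$: setting the incidence entries equal to $1$ is exactly what makes $\mathbf{v}(a_i) \otimes \Eout^\intercal(a_i,k)$ collapse to $\mathbf{v}(a_i)$, but if $0 = 1$ then $1$ is not an admissible (nonzero) incidence value. I would dispatch $0 = 1$ as a short separate case: if $\mathbb{V} = \{0\}$ then zero-sum-freeness is trivial, and if $\mathbb{V} \neq \{0\}$ pick any $c \neq 0$ and take the one-vertex, one-hyperedge hypergraph with $a$ initial for $k$, $\Eout^\intercal(a,k) = c$, and $\mathbf{v}(a) = 0$; then $\mathbf{e}(k) = 0 \otimes c = 1 \otimes c = c \neq 0$ while $\mathbf{v}(a) = 0$, contradicting \eqref{eq1}, so $0 = 1$ cannot occur under the hypothesis. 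Hence $\mathbb{V}$ is zero-sum-free in all cases. (If the paper already stipulates $0 \neq 1$ among the value-set axioms, this last case disappears entirely.)
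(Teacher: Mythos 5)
Your proof is correct and uses essentially the same gadget as the paper (a single hyperedge with two initial vertices, engineered so that the array product collapses to $v \oplus w$); the only difference is that you place $v, w$ in the vector $\mathbf{v}$ and the $1$'s in $\Eout$, whereas the paper does the reverse. Your placement lets you treat all $v, w$ uniformly via the forward direction of \eqref{eq1} (the paper must separately handle the case where one of $v, w$ is $0$, since out-incidence entries at initial vertices must be nonzero), at the cost of an explicit $0 \neq 1$ check --- a degeneracy the paper's argument also tacitly relies on but does not address, so your treatment is if anything the more careful one.
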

    \begin{proof}
        Let $G = (\{a, b\}, (\{a, b\}, \{a\}))$ be the directed hypergraph 
        as in Figure~\ref{f1}.1. Given $v, w \in \mathbb{V} \setminus \{0\}$, define 
        \begin{equation*}
             \mathbf{v} = \kbordermatrix{\nullrowlabel & a & b \\ \nullrowlabel & 1 & 1} \quad \text{and} \quad \Eout = \kbordermatrix{\nullrowlabel & a & b \\ k \tightrowlabel & v & w}.
        \end{equation*}
        Then
            $\displaystyle \mathbf{v} \arrayproduct \Eout^\intercal = \kbordermatrix{\nullrowlabel & k \\ \nullrowlabel & (v \otimes 1) \oplus (w \otimes 1) } = \kbordermatrix{\nullrowlabel & k \\ \nullrowlabel & v \oplus w }$.
        
        For \eqref{eq1} to hold in this case, we must have $v \oplus w \neq 0$.
        Zero-sum-freeness requires that $v \oplus w \neq 0$ whenever $v, w$ are not both $0$; we've shown $v \oplus w \neq 0$ when both $v, w$ are nonzero, so it just remains to address the cases where exactly one of $v, w$ are $0$. Indeed, $0 \oplus w = w \neq 0$ and $v \oplus 0 = v \neq 0$ since $0$ is an identity for $\oplus$, so $\mathbb{V}$ is zero-sum-free.
    \end{proof}
    
    \begin{lemma}
    \label{l2.3}
    	\eqref{eq1} $\implies$ $\mathbb{V}$ is zero-divisor-free.
    \end{lemma}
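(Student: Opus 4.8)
The plan is to imitate the proof of Lemma~\ref{l2.2}: produce one explicit directed hypergraph together with a choice of $\mathbf{v}$ and $\Eout$ for which the relevant array product collapses to the single term $v \otimes w$, so that validity of \eqref{eq1} on that instance immediately forces $v \otimes w \neq 0$. First I would fix the directed hypergraph $G$ of Figure~\ref{f1}.2, namely a single hyperedge $k$ whose unique initial vertex is $a$ (its terminal vertices are immaterial to \eqref{eq1}).

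Next, given nonzero $v, w \in \mathbb{V}$, I would set $\mathbf{v} = \kbordermatrix{\nullrowlabel & a \\ \nullrowlabel & v}$ and $\Eout = \kbordermatrix{\nullrowlabel & a \\ k \tightrowlabel & w}$; since $w \neq 0$, this $\Eout$ is a legitimate out-incidence array for $G$, recording exactly that $a$ is an initial vertex of $k$. Then $\mathbf{v} \arrayproduct \Eout^\intercal = \kbordermatrix{\nullrowlabel & k \\ \nullrowlabel & v \otimes w}$, the iterated sum $\bigoplus_{a' \in \Kout}$ having a single term. Applying \eqref{eq1} to this instance: because $\mathbf{v}(a) = v \neq 0$ and $\Eout^\intercal(a,k) = w \neq 0$, the right-hand side of the biconditional fails for the pair $(a,k)$, so the left-hand side must fail too, i.e. $v \otimes w \neq 0$. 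Since $v, w$ were arbitrary nonzero elements, this is precisely the contrapositive of zero-divisor-freeness, and the lemma follows.

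I expect essentially no obstacle. The only two points requiring any care are (a) verifying that the chosen $\Eout$ genuinely satisfies the definition of an out-incidence array, which holds exactly because we assumed $w \neq 0$, and (b) noting the contrast with Lemma~\ref{l2.2}: there, zero-sum-freeness additionally needed the cases where exactly one of $v, w$ vanishes to be checked separately, whereas here the statement $v \otimes w = 0 \Rightarrow v = 0 \text{ or } w = 0$ is literally the contrapositive of ``$v, w \neq 0 \Rightarrow v \otimes w \neq 0$'', so the single-term instance above is already the whole argument.
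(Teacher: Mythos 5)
Your proposal is correct and follows essentially the same route as the paper: the same single-vertex, single-hyperedge instance from Figure~\ref{f1}.2, the same choice of $\mathbf{v}$ and $\Eout$, and the same one-term product $v \otimes w$ forced to be nonzero by \eqref{eq1}. Your added remarks on the legitimacy of $\Eout$ as an out-incidence array and on the contrapositive being the whole statement are sound but go slightly beyond what the paper spells out.
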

    \begin{proof}
        Let $G$ be the directed hypergraph consisting of one vertex $a$ with a self-looped hyperedge $k$ from $\{a\}$ to $\{a\}$, as in Figure~\ref{f1}.2. Given $v, w \in \mathbb{V} \setminus \{0\}$, define 
        \begin{equation*}
            \mathbf{v} = \kbordermatrix{\nullrowlabel & a \\ \nullrowlabel & v } \quad \text{and} \quad \Eout = \kbordermatrix{\nullrowlabel & a \\ k \tightrowlabel & w }.
        \end{equation*}
        Then 
        $ \displaystyle
            \mathbf{v} \arrayproduct \Eout^\intercal = \kbordermatrix{\nullrowlabel & k \\ \nullrowlabel & v\otimes w}.
        $
        For \eqref{eq1} to hold in this case, we must have $v \otimes w \neq 0$, hence $\mathbb{V}$ is zero-divisor-free.
    \end{proof}
    
    \begin{lemma}
    \label{l2.4}
    	\eqref{eq1} $\implies$ $0$ is annihilator for $\otimes$.
    \end{lemma}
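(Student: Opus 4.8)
The plan is to find a single directed hypergraph whose first-step traversal computation has the form $(v \otimes 0) \oplus (0 \otimes w)$, so that \eqref{eq1} forces it to vanish, and then peel off both annihilator identities using the zero-sum-freeness already established in Lemma~\ref{l2.2}. Concretely, I would take $G = (\{a,b\}, (\{b\}, \{a\}))$, the two-vertex hypergraph with a single hyperedge $k$ directed from $\{b\}$ to $\{a\}$, as in Figure~\ref{f1}.3. Fix an arbitrary $v \in \mathbb{V}$ and an arbitrary $w \in \mathbb{V} \setminus \{0\}$, and set
\begin{equation*}
\mathbf{v} = \kbordermatrix{\nullrowlabel & a & b \\ \nullrowlabel & v & 0} \quad\text{and}\quad \Eout = \kbordermatrix{\nullrowlabel & a & b \\ k \tightrowlabel & 0 & w}.
\end{equation*}
This $\Eout$ is a valid out-incidence array: $\Eout(k,a) = 0$ since $a$ is not an initial vertex of $k$, and $\Eout(k,b) = w \neq 0$ since $b$ is.

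Next I would compute $\mathbf{e} = \mathbf{v} \arrayproduct \Eout^\intercal$, which gives $\mathbf{e}(k) = (v \otimes 0) \oplus (0 \otimes w)$. The right-hand condition of \eqref{eq1} holds for this $G$: for vertex $a$ we have $\Eout^\intercal(a,k) = 0$, and for vertex $b$ we have $\mathbf{v}(b) = 0$. Hence \eqref{eq1} forces $\mathbf{e}(k) = 0$, i.e., $(v \otimes 0) \oplus (0 \otimes w) = 0$. Since $\mathbb{V}$ is zero-sum-free by Lemma~\ref{l2.2}, both summands must vanish: $v \otimes 0 = 0$ and $0 \otimes w = 0$.

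It then remains to let the parameters range. Since $v$ was arbitrary in $\mathbb{V}$, we get $v \otimes 0 = 0$ for every $v$, and in particular $0 \otimes 0 = 0$; combined with $0 \otimes w = 0$ for all $w \neq 0$, this yields $0 \otimes v = v \otimes 0 = 0$ for every $v \in \mathbb{V}$, so $0$ is a two-sided annihilator for $\otimes$. The only thing requiring care is the boundary bookkeeping---ensuring the conclusion holds at $v = 0$ and $w = 0$ and not merely for nonzero elements---together with the observation that invoking Lemma~\ref{l2.2} here is legitimate, precisely because zero-sum-freeness has already been derived from \eqref{eq1}. I do not anticipate any deeper obstacle: once one notices that a ``missing'' incidence entry $0$ multiplied against a nonzero frontier value is exactly what manufactures a $v \otimes 0$ summand, the construction is essentially forced.
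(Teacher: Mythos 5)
Your proof is correct and takes essentially the same approach as the paper's: construct a small directed hypergraph whose vector and out-incidence array have complementary supports, so that \eqref{eq1} forces a sum of products each containing a $0$ factor to vanish, and then apply the zero-sum-freeness already obtained in Lemma~\ref{l2.2} to conclude each summand is $0$. The paper's version uses a three-vertex hyperedge producing $(0 \otimes v) \oplus (0 \otimes 0) \oplus (v \otimes 0)$ from a single parameter while you use a two-vertex example and range two parameters separately (and your hypergraph is not the one actually drawn in Figure~\ref{f1}.3), but these are cosmetic differences.
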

    \begin{proof}
        Let $G$ be the directed hypergraph consisting of three vertices $a, b, c$ with a hyperedge $k$ from $\{a\}$ to $\{b, c\}$ as in Figure~\ref{f1}.3. Given $v\in \mathbb{V}$, define 
        \begin{equation*}
            \mathbf{v} = \kbordermatrix{\nullrowlabel & a & b & c \\ \nullrowlabel & 0 & 0 & v } \quad \text{and} \quad \Eout = \kbordermatrix{\nullrowlabel & a & b & c \\ k \tightrowlabel & v & 0 & 0 }.
        \end{equation*}
        Then 
        $\displaystyle
            \mathbf{v} \arrayproduct \Eout^\intercal = \kbordermatrix{\nullrowlabel & k \\ \nullrowlabel & (0 \otimes v) \oplus (0 \otimes 0) \oplus (v \otimes 0)}.
        $
        \eqref{eq1} implies $(0 \otimes v) \oplus (0 \otimes 0) \oplus (v \otimes 0) = 0$. Since Lemma~\ref{l2.2} establishes that \eqref{eq1} implies $\mathbb{V}$ is zero-sum-free, $0 \otimes v = 0 \otimes 0 = v \otimes 0 = 0$, showing $0$ is an annihilator for $\otimes$. 
    \end{proof}

    The directed hypergraphs used in Lemmas~\ref{l2.2}, \ref{l2.3}, \& \ref{l2.4} to show that \textit{(i)} implies \textit{(iv)} may be similarly used to show that \textit{(iii)} implies \textit{(iv)} using the same assignments for $\mathbf{v}$ and $\Eout$, and setting $\Ein$ to be the in-incidence array where $\Ein(k, a) \neq 0$ implies $\Ein(k, a) = 1$ in each case. 
    
    Next, we show \textit{(ii)} implies \textit{(iv)}.
    
    \begin{lemma} 
    \label{l2.7}
        \eqref{eq2} $\implies$ $\mathbb{V}$ is zero-sum-free.
    \end{lemma}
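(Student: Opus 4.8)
The plan is to mirror the proof of Lemma~\ref{l2.2}, but dualized to the terminal side. There, a single hyperedge with two \emph{initial} vertices made the first-step sum $\bigoplus_a \mathbf{v}(a)\otimes\Eout^\intercal(a,k)$ collapse to $v\oplus w$; here I want a single vertex that is the \emph{terminal} vertex of two hyperedges, so that the second-step sum $\bigoplus_k \mathbf{e}(k)\otimes\Ein(k,a)$ collapses to $v\oplus w$. The one extra subtlety is that \eqref{eq2} constrains $\mathbf{e}$, which must genuinely be an array product $\mathbf{v}\arrayproduct\Eout^\intercal$, so I must also exhibit a $\mathbf{v}$ and $\Eout$ realizing the $\mathbf{e}$ I need; and since ``$0$ annihilates $\otimes$'' is not yet available (it will be the \eqref{eq2}-analogue of Lemma~\ref{l2.4}, proved afterward), I must choose the construction so that no terms of the form $u\otimes 0$ appear in that realization.

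Concretely, fix $v,w\in\mathbb{V}\setminus\{0\}$ and let $G$ be the directed hypergraph on the single vertex $a$ with two distinct self-loop hyperedges $k_1,k_2$, each from $\{a\}$ to $\{a\}$ (as in Lemma~\ref{l2.3}, but with two parallel loops). Take $\mathbf{v}$ to be the row vector with $\mathbf{v}(a)=1$; take $\Eout$ to be the out-incidence array with $\Eout(k_1,a)=v$ and $\Eout(k_2,a)=w$ (legitimate since $v,w\neq 0$ and $a$ is an initial vertex of both $k_1,k_2$); and take $\Ein$ to be the in-incidence array with $\Ein(k_1,a)=\Ein(k_2,a)=1$. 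Because $a$ is the only vertex, $\mathbf{e}=\mathbf{v}\arrayproduct\Eout^\intercal$ has $\mathbf{e}(k_i)=\mathbf{v}(a)\otimes\Eout^\intercal(a,k_i)$, i.e.\ $\mathbf{e}(k_1)=1\otimes v=v$ and $\mathbf{e}(k_2)=1\otimes w=w$, with no stray $\otimes 0$ terms to simplify.

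Next I would evaluate the left-hand side of \eqref{eq2} at $a$: with the total order $k_1<k_2$ on $K$, $\bigoplus_{k\in K}\mathbf{e}(k)\otimes\Ein(k,a)=(v\otimes 1)\oplus(w\otimes 1)=v\oplus w$. Since $\mathbf{e}(k_1)=v\neq 0$ and $\Ein(k_1,a)=1\neq 0$, the right-hand condition of \eqref{eq2} fails for the pair $(a,k_1)$, so validity of \eqref{eq2} forces $v\oplus w\neq 0$. Hence $v\oplus w\neq 0$ whenever $v,w$ are both nonzero; and when exactly one of $v,w$ is $0$, $v\oplus w$ equals the other and is therefore nonzero because $0$ is an identity for $\oplus$. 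This is exactly zero-sum-freeness.

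The only real obstacle I anticipate is the bookkeeping needed to present $\mathbf{e}$ as a bona fide array product without invoking annihilation of $0$; the single-vertex, two-self-loop construction is chosen precisely to sidestep it, since with one vertex every first-step sum has exactly one summand. (As in Lemma~\ref{l2.2}, the degenerate possibility $1=0$ is handled the same way it is there: replace the uses of $1$ by an arbitrary nonzero element when one exists, and note the claim is trivial when $\mathbb{V}=\{0\}$.)
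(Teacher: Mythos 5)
Your proof is correct and follows essentially the same route as the paper's: the same single-vertex hypergraph with two parallel self-loops, collapsing the second-step sum to $v \oplus w$ and invoking validity of \eqref{eq2} to force $v \oplus w \neq 0$. The only differences are cosmetic---you place $v,w$ in $\mathbf{e}$ (realized explicitly as $\mathbf{v}\arrayproduct\Eout^\intercal$) rather than in $\Ein$, and you handle the degenerate $1=0$ case, both of which are slightly more careful than the paper but not a different argument.
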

    \begin{proof}
        Let $G$ be the directed hypergraph consisting of one vertex $a$ with two self-looped hyperedges $k_1,k_2$, both from $\{a\}$ to $\{a\}$, as in Figure~\ref{f1}.4. Given $v,w \in \mathbb{V}\setminus \{0\}$, define
        \begin{equation*}
            \mathbf{e} = \kbordermatrix{\nullrowlabel & k_1 & k_2 \\ \nullrowlabel & 1 & 1} \quad \text{and} \quad \Ein = \kbordermatrix{\nullrowlabel & a \\ k_1 \tightrowlabel & v \\ k_2 \tightrowlabel & w}.
        \end{equation*}
        Then
        $\displaystyle
            \mathbf{e} \arrayproduct \Ein = \kbordermatrix{\nullrowlabel & a \\ \nullrowlabel & (v \otimes 1) \oplus (w \otimes 1)} = \kbordermatrix{\nullrowlabel & a \\ \nullrowlabel & v \oplus w }.
        $
        For \eqref{eq2} to hold in this case, we must have $v \oplus w \neq 0$, hence $\mathbb{V}$ is zero-sum-free. (The cases where exactly one of $v, w$ are $0$ are addressed in the same way as in the proof of Lemma~\ref{l2.2}.)
    \end{proof}
    
    \begin{lemma}
    \label{l2.8}
    	\eqref{eq2} $\implies$ $\mathbb{V}$ is zero-divisor-free.
    \end{lemma}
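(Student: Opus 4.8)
The plan is to mirror the proof of Lemma~\ref{l2.3}, with the array product $\mathbf{e} \arrayproduct \Ein$ playing the role that $\mathbf{v} \arrayproduct \Eout^\intercal$ played there. The point is that \eqref{eq2} constrains sums of the form $\bigoplus_{k \in K} \mathbf{e}(k) \otimes \Ein(k,a)$, so to isolate a single product $v \otimes w$ I want a directed hypergraph with exactly one hyperedge and exactly one terminal vertex.

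First I would let $G$ be the directed hypergraph consisting of a single vertex $a$ together with a single self-looped hyperedge $k$ from $\{a\}$ to $\{a\}$ — the same hypergraph already pictured in Figure~\ref{f1}.2, now used on the in-incidence side. Given $v, w \in \mathbb{V} \setminus \{0\}$, set $\mathbf{e} = \kbordermatrix{\nullrowlabel & k \\ \nullrowlabel & v}$ and $\Ein = \kbordermatrix{\nullrowlabel & a \\ k \tightrowlabel & w}$; these are legitimate inputs to \eqref{eq2} since $\mathbf{e}(k) = v \neq 0$ and, as $a$ is the unique terminal vertex of the only hyperedge $k$, $\Ein$ with $\Ein(k,a) = w \neq 0$ is a valid in-incidence array for $G$. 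Then $\mathbf{e} \arrayproduct \Ein = \kbordermatrix{\nullrowlabel & a \\ \nullrowlabel & v \otimes w}$, a one-entry array, and since neither $\mathbf{e}(k)$ nor $\Ein(k,a)$ is $0$, \eqref{eq2} forces $v \otimes w \neq 0$. As $v, w$ were arbitrary nonzero elements, $\mathbb{V}$ is zero-divisor-free.

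I do not expect a real obstacle here: because $K$ is a singleton, the iterated operator $\bigoplus_{k \in K}$ degenerates and no properties of $\oplus$ are invoked at all, so the argument is even simpler than Lemma~\ref{l2.7}. The only thing worth checking carefully is that the chosen $\mathbf{e}$ and $\Ein$ genuinely satisfy the definitions (in particular that $\Ein$ is an in-incidence array for this $G$), which is immediate from the structure of $G$.
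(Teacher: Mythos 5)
Your proof is correct and is essentially identical to the paper's own proof of Lemma~\ref{l2.8}: same single-vertex, single-self-loop hypergraph, same assignments $\mathbf{e}(k)=v$, $\Ein(k,a)=w$, same conclusion from the one-entry product. (If anything, your write-up is cleaner, since the paper's version contains typos referring to $\Eout^\intercal$ and \eqref{eq1} where $\Ein$ and \eqref{eq2} are clearly intended.)
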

    \begin{proof}
        Let $G$ be the directed hypergraph consisting of one vertex $a$ with a self-looped hyperedge $k$ from $\{a\}$ to $\{a\}$, as in Figure~\ref{f1}.5.
        Given $v,w \in \mathbb{V}\setminus\{0\}$, define 
        \begin{equation*}
            \mathbf{e} = \kbordermatrix{\nullrowlabel & k \\ \nullrowlabel & v } \quad \text{and} \quad \Ein = \kbordermatrix{\nullrowlabel & a \\ k \tightrowlabel & w}.
        \end{equation*}
        Then 
        $\displaystyle
            \mathbf{e} \arrayproduct \Eout^\intercal = \kbordermatrix{\nullrowlabel & a \\ \nullrowlabel & v\otimes w }.
        $
        For \eqref{eq1} to hold in this case, we must have $v \otimes w \neq 0$, hence $\mathbb{V}$ is zero-divisor-free.
    \end{proof}
    
    \begin{lemma}
    \label{l2.9}
    	\eqref{eq2} $\implies$ $0$ is annihilator for $\otimes$.
    \end{lemma}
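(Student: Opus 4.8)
The plan is to dualize the proof of Lemma~\ref{l2.4}: replace the out-incidence construction and \eqref{eq1} there by an in-incidence construction and \eqref{eq2}, and use Lemma~\ref{l2.7} (which already gives zero-sum-freeness from \eqref{eq2}) in place of Lemma~\ref{l2.2}. Fix $v \in \mathbb{V}$. Since the iterated sum in \eqref{eq2} runs over hyperedges at a fixed vertex (rather than over vertices at a fixed hyperedge, as in \eqref{eq1}), I would take $G$ to be the directed hypergraph of Figure~\ref{f1}.6 on two vertices $a, b$ carrying three hyperedges --- for instance a self-loop $k_1$ at $a$ together with parallel hyperedges $k_2, k_3$ from $\{a\}$ to $\{b\}$. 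Setting $\mathbf{e} = \kbordermatrix{\nullrowlabel & k_1 & k_2 & k_3 \\ \nullrowlabel & 0 & 0 & v}$ and taking $\Ein$ to be an in-incidence array with $\Ein(k_1, a) = v$ and $\Ein(k_2, a) = \Ein(k_3, a) = 0$ (the entries of the $b$-column being whatever the terminal-vertex structure forces: nonzero for $k_2, k_3$, zero for $k_1$), the $a$-entry of $\mathbf{e} \arrayproduct \Ein$ is $(0 \otimes v) \oplus (0 \otimes 0) \oplus (v \otimes 0)$.

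Next I would observe that the right-hand clause of \eqref{eq2} holds at the vertex $a$: for $k_1$ and $k_2$ we have $\mathbf{e}(k_i) = 0$, and for $k_3$ we have $\Ein(k_3, a) = 0$, so ``$\Ein(k, a) = 0$ or $\mathbf{e}(k) = 0$'' is true for every hyperedge $k$. Hence \eqref{eq2} forces $(0 \otimes v) \oplus (0 \otimes 0) \oplus (v \otimes 0) = 0$, and since Lemma~\ref{l2.7} shows \eqref{eq2} implies $\mathbb{V}$ is zero-sum-free, each summand must be $0$; in particular $0 \otimes v = 0$ and $v \otimes 0 = 0$. As $v \in \mathbb{V}$ was arbitrary, $0$ is an annihilator for $\otimes$.

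I do not expect a genuine obstacle here --- this is a direct dualization of Lemma~\ref{l2.4} --- so the real work is bookkeeping: confirming that the prescribed $\Ein$ is a bona fide in-incidence array (each of $k_1, k_2, k_3$ has a terminal vertex, which is exactly what pins down the $b$-column), and that the sum defining the $a$-entry of $\mathbf{e} \arrayproduct \Ein$ collects precisely the three displayed terms and no others. One could also shave the construction down to two hyperedges, dropping the redundant $0 \otimes 0$ summand (which follows anyway by taking $v = 0$), or even to a single hyperedge handled in two cases, but keeping the three-hyperedge form makes the symmetry with Lemma~\ref{l2.4} manifest.
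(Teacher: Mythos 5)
Your proof is correct and takes essentially the same route as the paper: the paper also builds a two-vertex, three-hyperedge example (its $k_3$ is a self-loop at $b$ rather than a second edge from $\{a\}$ to $\{b\}$, which is immaterial), forces the $a$-entry $(0\otimes v)\oplus(v\otimes 0)\oplus(0\otimes 0)=0$ via the right-to-left direction of \eqref{eq2}, and then invokes zero-sum-freeness from Lemma~\ref{l2.7}. Only your parenthetical aside that $0\otimes 0=0$ ``follows by taking $v=0$'' is off (at $v=0$ the array ceases to be a valid in-incidence array); in both your argument and the paper's, $0\otimes 0=0$ should instead be read off from the third summand with $v\neq 0$.
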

    \begin{proof}
    Let $G$ be the directed hypergraph consisting of two vertices $a, b$ with three hyperedges, a self-loop to and from $\{a\}$, one from $\{a\}$ to $\{b\}$, and a second self-loop to and from $\{b\}$, as in Figure~\ref{f1}.6. Given $v \in \mathbb{V}$, define 
    \begin{equation*}
         \mathbf{e} = \kbordermatrix{\nullrowlabel & k_1 & k_2 & k_3 \\ \nullrowlabel & 0 & v & 0 } \quad \text{and} \quad \Ein = \kbordermatrix{\nullrowlabel & a & b \\ k_1 \tightrowlabel & v & 0 \\ k_2 \tightrowlabel & 0 & 1 \\ k_3 \tightrowlabel & 0 & 1 }.
    \end{equation*}
    Then 
    $\displaystyle
         \mathbf{e} \arrayproduct \Eout^\intercal = \kbordermatrix{\nullrowlabel & a & b \\ \nullrowlabel & (0 \otimes v) \oplus (v \otimes 0) \oplus (0 \otimes 0) & v }.
    $ 
    \eqref{eq2} implies $(0 \otimes v) \oplus (v \otimes 0) \oplus (0 \otimes 0) = 0$. Since Lemma~\ref{l2.7} establishes that \eqref{eq2} implies $\mathbb{V}$ is zero-sum-free, $0 \otimes v = v \otimes 0 = 0 \otimes 0 = 0$, showing $0$ is an annihilator for $\otimes$. 
    \end{proof}
    
    Next, we show \textit{(iv)} implies \textit{(i)} and \textit{(ii)}.
    
    \begin{lemma}
    \label{l2.5}
        \textit{(iv)} $\implies$ \eqref{eq1}.
    \end{lemma}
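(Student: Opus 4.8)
The plan is to verify both directions of the biconditional in \eqref{eq1} directly, using each of the three conditions in \textit{(iv)} essentially once. Fix a directed hypergraph $G$, a row vector $\mathbf{v}$ indexed by $\Kout$, an out-incidence array $\Eout$, and a hyperedge $k \in K$; enumerate $\Kout = \{a_1 < \cdots < a_n\}$ in the fixed total order, and set $u_i := \mathbf{v}(a_i) \otimes \Eout^\intercal(a_i, k)$. By the right-associativity convention, $\mathbf{e}(k) = u_1 \oplus (u_2 \oplus (\cdots \oplus u_n))$. It therefore suffices to prove that $\mathbf{e}(k) = 0$ holds if and only if $u_i = 0$ for every $i$, and then to note that, for each $i$, the condition $u_i = 0$ is equivalent to the disjunction ``$\mathbf{v}(a_i) = 0$ or $\Eout^\intercal(a_i,k) = 0$'': the forward implication here is exactly zero-divisor-freeness \textit{(II)}, and the reverse is exactly the annihilation property \textit{(III)}.

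For the ``if'' direction, assume each $u_i = 0$. Because $0$ is the $\oplus$-identity we have $0 \oplus 0 = 0$, and a one-line induction on $n$ then gives $u_1 \oplus (u_2 \oplus \cdots \oplus u_n) = 0$, i.e., $\mathbf{e}(k) = 0$. For the ``only if'' direction, assume $\mathbf{e}(k) = 0$. I would prove by induction on $n$ that any right-associated sum $u_1 \oplus (u_2 \oplus \cdots \oplus u_n)$ equal to $0$ forces $u_i = 0$ for all $i$: the cases $n \le 1$ are immediate, and in the inductive step zero-sum-freeness \textit{(I)}, applied to $u_1 \oplus (u_2 \oplus \cdots \oplus u_n) = 0$, yields both $u_1 = 0$ and $u_2 \oplus (\cdots \oplus u_n) = 0$, and the induction hypothesis finishes the rest. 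Combining the two directions with the equivalence noted above yields \eqref{eq1} for this (arbitrary) $k$, hence for all $(a,k) \in \Kout \times K$.

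I do not anticipate a real obstacle; the only care needed is bookkeeping. The key step is lifting the binary statement of zero-sum-freeness in \textit{(I)} to the $n$-ary right-associated sum, which works precisely because stripping off the leftmost summand is the natural induction, so the parenthesization convention cooperates. If one wishes to allow $n = 0$ (empty $\Kout$, with $\bigoplus_{a \in \emptyset} = 0$), both sides of \eqref{eq1} hold vacuously and this should be stated. Finally, I would observe that running the identical argument with $\mathbf{e}$, $\Ein$, and index set $K$ in place of $\mathbf{v}$, $\Eout^\intercal$, and $\Kout$ proves \textit{(iv)} $\implies$ \eqref{eq2}, so Lemma~\ref{l2.5} and its in-incidence analogue are handled by the same reasoning.
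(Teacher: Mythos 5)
Your proposal is correct and follows essentially the same route as the paper's proof: zero-sum-freeness handles the forward direction of the sum, zero-divisor-freeness and the annihilator property handle the term-by-term equivalence, and the identity property of $0$ handles the converse. The only difference is that you make explicit the induction lifting binary zero-sum-freeness to the right-associated $n$-ary sum, which the paper leaves implicit.
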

    \begin{proof}
        Assume that $\mathbb{V}$ is zero-sum-free, zero-divisor-free, and that $0$ is an annihilator for $\otimes$. Fix $k \in K$. 

        Suppose $\bigoplus_{a \in \Kout}{\mathbf{v}(a) \otimes \Eout^\intercal(a, k)} = 0$. Zero-sum-freeness shows $\mathbf{v}(a) \otimes \Eout^\intercal(a, k) = 0$ for all $a \in \Kout$, after which zero-divisor-freeness shows that for all $a \in \Kout$, $\mathbf{v}(a) = 0$ or $\Eout^\intercal(a, k) = 0$.

        Conversely, if $\mathbf{v}(a) = 0$ or $\Eout^\intercal(a, k) = 0$ for all $a \in \Kout$, then $0$ being an annihilator for $\otimes$ shows $\mathbf{v}(a) \otimes \Eout^\intercal(a, k) = 0$. Because $0$ is an identity for $\oplus$, 
        \begin{equation*}
            \bigoplus_{a \in \Kout}{\mathbf{v}(a) \otimes \Eout^\intercal(a, k)} = \bigoplus_{a \in \Kout}{0} = 0. \qedhere
        \end{equation*}
    \end{proof}

    Showing \textit{(iv)} implies \textit{(ii)} is analogous.
    
    \begin{lemma}
    \label{l2.10}
        \textit{(iv)} $\implies$ \eqref{eq2}.
    \end{lemma}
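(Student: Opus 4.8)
The plan is to reuse, essentially verbatim, the argument of Lemma~\ref{l2.5}, interchanging the roles played by vertices and hyperedges: here the index held fixed is a terminal vertex $a \in \Kin$, the iterated sum ranges over the hyperedge set $K$, and the arrays $\mathbf{v}, \Eout^\intercal$ are replaced by $\mathbf{e}, \Ein$. So I would begin by assuming $(\mathbb{V}, \oplus, \otimes, 0, 1)$ is zero-sum-free, zero-divisor-free, and that $0$ is an annihilator for $\otimes$, and fix $a \in \Kin$.

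For the forward implication, I would suppose $\bigoplus_{k \in K}{\mathbf{e}(k) \otimes \Ein(k, a)} = 0$ and peel off the summands one at a time: a single application of zero-sum-freeness to the outermost $\oplus$ of the right-associated sum forces both the first summand and the remaining tail to be $0$, and iterating (formally, inducting on $\size(K)$) gives $\mathbf{e}(k) \otimes \Ein(k, a) = 0$ for every $k \in K$; zero-divisor-freeness then yields $\mathbf{e}(k) = 0$ or $\Ein(k, a) = 0$ for each such $k$. For the converse, I would assume $\mathbf{e}(k) = 0$ or $\Ein(k, a) = 0$ for all $k$, use that $0$ annihilates under $\otimes$ to conclude each summand $\mathbf{e}(k) \otimes \Ein(k, a)$ is $0$, and then use that $0$ is an identity for $\oplus$ to collapse $\bigoplus_{k \in K}{0}$ to $0$.

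I do not expect a genuine obstacle. The only things to watch are bookkeeping --- making sure the fixed index is the terminal vertex and the summation index is the hyperedge, so that the statement matches \eqref{eq2} rather than \eqref{eq1} --- and the observation that, although $\oplus$ is assumed neither associative nor commutative and $\bigoplus_{k \in K}$ is interpreted with a fixed right-associative order, neither direction of the argument ever rearranges or re-associates the sum: the forward direction strips off one term at a time, and the converse only uses the two-sided identity property of $0$. Hence the chosen ordering of $K$ is immaterial and the proof goes through for an arbitrary directed hypergraph $G$.
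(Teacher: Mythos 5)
Your proposal is correct and follows essentially the same route as the paper's proof of Lemma~\ref{l2.10}: fix $a \in \Kin$, use zero-sum-freeness and zero-divisor-freeness for the forward direction, and the annihilator and additive-identity properties of $0$ for the converse. Your explicit remark that zero-sum-freeness must be applied iteratively to the right-associated sum (by induction on $\size(K)$) is a small point of added care that the paper leaves implicit, but it is the same argument.
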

    \begin{proof}
        Assume that $\mathbb{V}$ is zero-sum-free, zero-divisor-free, and that $0$ is an annihilator for $\otimes$. Fix $a \in \Kin$.

        Suppose $\bigoplus_{k \in K}{\mathbf{e}(k) \otimes \Ein(k, a)} = 0$. Zero-sum-freeness shows $\mathbf{e}(k) \otimes \Ein(k, 1) = 0$, after which zero-divisor-freeness shows that for all $k \in K$, $\mathbf{e}(k) = 0$ or $\Ein(k, a) = 0$.

        Conversely, if $\mathbf{e}(k) = 0$ or $\Ein(k, a) = 0$ for all $k \in K$, then $0$ being an annihilator for $\otimes$ shows $\mathbf{e}(k) \otimes \Ein(k, a) = 0$. Because $0$ is an identity for $\oplus$,
        \begin{equation*}
            \bigoplus_{k \in K}{\mathbf{e}(k) \otimes \Ein(k, a)} = \bigoplus_{k \in K}{0} = 0. \qedhere
        \end{equation*}
    \end{proof}

    Finally, we show \textit{(i)} implies \textit{(iii)}. As \textit{(i)} implies \textit{(iv)}, we may assume $\mathbb{V}$ is zero-sum-free and zero-divisor-free and that $0$ is an annihilator for $\otimes$. As such, $0 = ((\mathbf{v} \arrayproduct \Eout^\intercal) \arrayproduct \Ein)(a) = \bigoplus_{k \in K}{(\mathbf{v} \arrayproduct \Eout^\intercal)(k) \otimes \Ein(k, a)}$ implies $(\mathbf{v} \arrayproduct \Eout^\intercal)(k) = 0$ or $\Ein(k, a) = 0$ for every $k \in K$. \textit{(i)} is the validity of \eqref{eq1}, so for a given $k \in K$, $(\mathbf{v} \arrayproduct \Eout^\intercal)(k) = 0$ is equivalent to saying no vertex indicated in $\mathbf{v}$ is an initial vertex of $k$. As such, ``$(\mathbf{v} \arrayproduct \Eout^\intercal)(k) = 0$ or $\Ein(k, a) = 0$ for every $k \in K$'' is equivalent to saying no hyperedge connects a vertex indicated in $\mathbf{v}$ to $a$, as desired.
\end{proof}

\section{Algebraic Conditions for Convention Independence}
\label{convention independence section}

Now that we have established the algebraic conditions under which $(\mathbf{v}\arrayproduct \Eout^\intercal)\arrayproduct \Ein$ correctly computes a one-step BFS traversal, the independence of the latter graph-theoretic interpretation from the conventions mentioned in \S\ref{introduction section}--- the ordering of vertices and hyperedges when computing array products, the use of row versus column vectors, and the computation of $\mathbf{v} \arrayproduct \Eout^\intercal$ versus $\Eout^\intercal \arrayproduct \Ein$ first---suggests that the computation $\mathbf{v} \mapsto (\mathbf{v} \arrayproduct \Eout^\intercal) \arrayproduct \Ein$ should similarly be independent of those conventions.

\subsection{Associativity \& Commutativity of \texorpdfstring{$\oplus$}{Addition}}

Among the most arbitrary of the aforementioned conventions is the assumption of an implicit, fixed total ordering of the vertices and hyperedges in a given directed hypergraph $G$. Independence of such orderings relates to the associativity and commutativity of $\oplus$.

\begin{thm}
\label{addition commutativity theorem}
     Assume $(\mathbb{V}, \oplus, \otimes, 0, 1)$ satisfies the conditions of Theorem~\ref{t2.1}. The following are equivalent.
    \begin{enumerate}[(i)] 
		\item Any LinAlgBFS $(\mathbf{v} \mathbin{{\oplus}.{\otimes}} \mathbf{E}_\mathrm{out}) \mathbin{{\oplus}.{\otimes}} \mathbf{E}_\mathrm{in}^\intercal$ computation is invariant under reordering of vertices. 
		\item $\oplus$ is associative and commutative.
	\end{enumerate} 
\end{thm}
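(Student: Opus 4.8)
The plan is to prove the two implications separately. The implication \textit{(ii)} $\Rightarrow$ \textit{(i)} is routine, while \textit{(i)} $\Rightarrow$ \textit{(ii)} requires exhibiting small directed hypergraphs witnessing any failure of commutativity or associativity; throughout I write $\mathbf{e} = \mathbf{v}\arrayproduct\Eout^\intercal$ and $\mathbf{w} = \mathbf{e}\arrayproduct\Ein$ as in \S\ref{linalgbfs validity section}.

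For \textit{(ii)} $\Rightarrow$ \textit{(i)} I would invoke the standard fact that an associative, commutative binary operation extends uniquely to an operation on finite multisets, so that for any $f\colon\vertexset\to\mathbb{V}$ the right-associated iterated sum $\bigoplus_{a\in\vertexset}f(a)$ does not depend on the total order placed on $\vertexset$. A reordering of vertices changes only the order on $\vertexset$, not that of $K$. The first array product forms, for each hyperedge $k$, the vertex-indexed sum $\mathbf{e}(k)=\bigoplus_{a\in\vertexset}\mathbf{v}(a)\otimes\Eout^\intercal(a,k)$, which is therefore unchanged; the second forms the hyperedge-indexed sum $\mathbf{w}(a)=\bigoplus_{k\in K}\mathbf{e}(k)\otimes\Ein(k,a)$, which is unchanged because neither $\mathbf{e}$ nor the order on $K$ has changed. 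Hence the LinAlgBFS output is invariant under reordering of vertices.

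For \textit{(i)} $\Rightarrow$ \textit{(ii)} I would first extract commutativity. Given $v,w\in\mathbb{V}\setminus\{0\}$, let $G$ have vertices $a,b$ and a single hyperedge $k$ with initial vertex set $\{a,b\}$ and terminal vertex set $\{a\}$, and set $\mathbf{v}(a)=v$, $\mathbf{v}(b)=w$, the two out-incidence entries of $k$ equal to $1$, and $\Ein(k,a)=1$. Running LinAlgBFS with the order $a<b$ gives $\mathbf{e}(k)=v\oplus w$ and then $\mathbf{w}(a)=\mathbf{e}(k)\otimes1=v\oplus w$, since the sum over $K$ has the single term $k$; the order $b<a$ gives $\mathbf{w}(a)=w\oplus v$, and the remaining output entry is $0$ in either order since $0$ annihilates for $\otimes$. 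Invariance forces $v\oplus w=w\oplus v$; the cases in which $v$ or $w$ is $0$ follow from $0$ being an identity for $\oplus$. For associativity, given $u,v,w\in\mathbb{V}\setminus\{0\}$, let $G$ have vertices $a,b,c$ and a single hyperedge $k$ with initial vertex set $\{a,b,c\}$ and terminal vertex set $\{a\}$, and set $\mathbf{v}(a)=u$, $\mathbf{v}(b)=v$, $\mathbf{v}(c)=w$, the three out-incidence entries of $k$ equal to $1$, and $\Ein(k,a)=1$. Because the iterated $\oplus$ is right-associated and ordered by the total order on $\vertexset$, the order $a<b<c$ gives $\mathbf{w}(a)=u\oplus(v\oplus w)$ while the order $c<a<b$ gives $\mathbf{w}(a)=w\oplus(u\oplus v)$; invariance equates these, and one use of the commutativity just proved rewrites the right-hand side as $(u\oplus v)\oplus w$. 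The cases in which some of $u,v,w$ is $0$ reduce to the identity laws, so $\oplus$ is associative and commutative, establishing \textit{(ii)}.

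The main design issue in \textit{(i)} $\Rightarrow$ \textit{(ii)} is making the reordering effect survive to the output: routing everything through a single hyperedge turns the second array product into multiplication by $1$, transmitting $\mathbf{e}(k)$ faithfully, and $0$ annihilating for $\otimes$ pins the other output coordinates at $0$. The two chosen vertex orders must realize exactly the two sides of the target identity, which works directly for commutativity but, since a right-associated three-fold sum is never literally of the form $(\alpha\oplus\beta)\oplus\gamma$, needs one pass through the already-established commutativity for associativity. I expect choosing these orders and discharging the zero cases to be the only real work; the rest is bookkeeping with the identity and annihilator laws.
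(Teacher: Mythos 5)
Your proposal is correct and follows essentially the same route as the paper: the same two-vertex and three-vertex single-hyperedge gadgets with all-ones out-incidence and $\Ein(k,a)=1$, the same pair of orderings ($a<b$ vs.\ $b<a$, and $a<b<c$ vs.\ $c<a<b$) yielding $v\oplus w = w\oplus v$ and $u\oplus(v\oplus w)=w\oplus(u\oplus v)$, the same single application of commutativity to conclude associativity, and the same appeal to the standard permutation-invariance of iterated associative--commutative operations for the converse. The only cosmetic difference is that you restrict to nonzero entries and discharge the zero cases via the identity laws, whereas the paper simply takes the entries arbitrary.
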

\begin{proof}
    First assume \textit{(i)} holds and let $G$ be the hypergraph consisting of two vertices $a, b$ and one hyperedge $k$ from $\{a, b\}$ to $\{a\}$ as shown in Figure~\ref{f2}.1. Given $u, v \in \mathbb{V}$, define 
    \begin{equation*}
        \mathbf{v} = \kbordermatrix{\nullrowlabel & a & b \\ \nullrowlabel & u & v }, \quad \Eout = \kbordermatrix{\nullrowlabel & a & b \\ k \tightrowlabel & 1 & 1 }, \quad \Ein = \kbordermatrix{\nullrowlabel & a & b \\ k \tightrowlabel & 1 & 0 }.
    \end{equation*}
    The orderings $a < b$ and $b < a$ respectively give 
    \begin{align*}
        \kbordermatrix{\nullrowlabel & a & b \\ \nullrowlabel & u \oplus v & 0 } = (\mathbf{v} \arrayproduct \Eout^\intercal) \arrayproduct \Ein = \kbordermatrix{\nullrowlabel & b & a \\ \nullrowlabel & 0 & v \oplus u }.
    \end{align*}
    By hypothesis, we must have $u \oplus v = v \oplus u$, showing $\oplus$ is commutative. 

    For associativity, let $G$ be the hypergraph with three vertices $a, b, c$ and a hyperedge from $\{a, b, c\}$ to $\{a\}$ as shown in Figure~\ref{f2}.2. Given $u, v, w \in \mathbb{V}$, define
    \begin{equation*}
    \begin{split}
        \mathbf{v} = \kbordermatrix{\nullrowlabel & a & b & c \\ \nullrowlabel & u & v & w}, \quad
        \Eout = \kbordermatrix{\nullrowlabel & a & b & c \\ k \tightrowlabel & 1 & 1 & 1 }, \\
        \text{and}~\Ein = \kbordermatrix{\nullrowlabel & a & b & c \\ k \tightrowlabel & 1 & 0 & 0 }.
    \end{split}
    \end{equation*}
    The orderings $a < b < c$ and $c < a < b$ respectively give
    \begin{align*}
        \kbordermatrix{\nullrowlabel & a & b & c \\ \nullrowlabel & u \oplus (v \oplus w) & 0 & 0 } & = (\mathbf{v} \arrayproduct \Eout^\intercal)\arrayproduct \Ein \\
        & = \kbordermatrix{\nullrowlabel & c & a & b \\ \nullrowlabel & 0 & w \oplus (u \oplus v) & 0 }.
    \end{align*}
    By hypothesis, we must have $u \oplus (v \oplus w) = w \oplus (u \oplus v)$. Coupled with the commutativity of $\oplus$, we have $u \oplus (v \oplus w) = w \oplus (u \oplus v) = (u \oplus v) \oplus w$, proving associativity of $\oplus$.

    It is well-known that associativity and commutativity of the binary operation $\oplus$ implies the iterated operation $\bigoplus$ is invariant under permutation of its arguments, showing \textit{(ii)} implies \textit{(i)}.
\end{proof}

Note that another way associativity of $\oplus$ could be established is requiring that the computation of array products be independent of the choice to use right- versus left-associativity. 

\subsection{Commutativity of \texorpdfstring{$\otimes$}{Multiplication}}

There are competing conventions for LinAlgBFS, one using multiplication on the left by a row vector (the convention used in this paper) and one using multiplication on the right by a column vector. Switching between these two conventions amounts to transposition, and this observation allows for characterizing the independence of LinAlgBFS from the aforementioned convention:

\begin{thm}
\label{multiplication commutativity theorem}
    Assume $(\mathbb{V}, \oplus, \otimes, 0, 1)$ satisfies the conditions of Theorem~\ref{t2.1}. The following are equivalent.
	\begin{enumerate}[(i)] 
		\item $(\mathbf{v} \arrayproduct \Eout^\intercal) \arrayproduct \Ein = (\Ein^\intercal \arrayproduct (\Eout \arrayproduct \mathbf{v}^\intercal))^\intercal$ for any directed hypergraph $G$, any incidence arrays $\mathbf{E}_\mathrm{out}, \mathbf{E}_\mathrm{in}$, and vector $\mathbf{v}$. 
		\item $\otimes$ is commutative.
	\end{enumerate}
\end{thm}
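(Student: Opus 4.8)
\noindent The plan is to prove the equivalence by reducing, in the \textit{(ii)} $\Rightarrow$ \textit{(i)} direction, the whole identity to a single ``transpose reverses array products'' lemma, and, in the \textit{(i)} $\Rightarrow$ \textit{(ii)} direction, by specializing to the smallest nontrivial directed hypergraph.

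For \textit{(ii)} $\Rightarrow$ \textit{(i)} I would first record the lemma: if $\otimes$ is commutative then $(\mathbf{A} \arrayproduct \mathbf{B})^\intercal = \mathbf{B}^\intercal \arrayproduct \mathbf{A}^\intercal$ for all compatible associative arrays $\mathbf{A}, \mathbf{B}$. To check this I would compare entries: the $(k_3,k_1)$-entry of the left side is $\bigoplus_{k_2} \mathbf{A}(k_1,k_2) \otimes \mathbf{B}(k_2,k_3)$ and that of the right side is $\bigoplus_{k_2} \mathbf{B}(k_2,k_3) \otimes \mathbf{A}(k_1,k_2)$; since both iterated sums run over $K_2$ in the same total order with the same right-associative bracketing, and commutativity of $\otimes$ equates the summands one by one, the two entries coincide. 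It is worth noting that this lemma uses nothing about $\oplus$. Granting it, \textit{(ii)} $\Rightarrow$ \textit{(i)} is immediate from two applications of the lemma together with the involutivity of transpose:
\[
\bigl(\Ein^\intercal \arrayproduct (\Eout \arrayproduct \mathbf{v}^\intercal)\bigr)^\intercal = (\Eout \arrayproduct \mathbf{v}^\intercal)^\intercal \arrayproduct \Ein = \bigl((\mathbf{v}^\intercal)^\intercal \arrayproduct \Eout^\intercal\bigr) \arrayproduct \Ein = (\mathbf{v} \arrayproduct \Eout^\intercal) \arrayproduct \Ein .
\]

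For \textit{(i)} $\Rightarrow$ \textit{(ii)} I would take $G$ to be the single-vertex directed hypergraph with vertex $a$ and one self-looped hyperedge $k$ from $\{a\}$ to $\{a\}$ (the hypergraph of Figure~\ref{f1}.2), and for an arbitrary $u \in \mathbb{V}$ and arbitrary $q \in \mathbb{V} \setminus \{0\}$ set the $1 \times 1$ arrays $\mathbf{v} = (u)$ over $\{a\}$, $\Eout = (1)$ and $\Ein = (q)$ over the keys $(k,a)$; note $\Eout(k,a) = 1 \neq 0$ and $\Ein(k,a) = q \neq 0$ are exactly what is required for $a$ to be an initial and a terminal vertex of $k$. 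A one-line computation gives $(\mathbf{v} \arrayproduct \Eout^\intercal) \arrayproduct \Ein = (u \otimes q)$, using $u \otimes 1 = u$, while $\bigl(\Ein^\intercal \arrayproduct (\Eout \arrayproduct \mathbf{v}^\intercal)\bigr)^\intercal = (q \otimes u)$, using $1 \otimes u = u$, both as arrays over $\{a\}$. Hence \textit{(i)} forces $u \otimes q = q \otimes u$ for all $u \in \mathbb{V}$ and all nonzero $q$; the remaining case $q = 0$ is covered by condition (III) of Theorem~\ref{t2.1}, so $\otimes$ is commutative. (If $0 = 1$ then, by the annihilator condition, $\mathbb{V} = \{0\}$ and commutativity is trivial, so $\Eout = (1) \neq (0)$ is legitimate in the nontrivial case.)

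I do not expect a real obstacle here. The one place demanding care is the bookkeeping of which key set of each array is the ``row'' and which the ``column'', so that the transposes in the displayed chain line up; and it is worth emphasizing, both for cleanliness of the argument and as a sanity check, that the transpose-reversal lemma requires no assumption on $\oplus$ at all, so its only nontrivial input is commutativity of $\otimes$---and conversely that a single self-loop already witnesses the failure of the identity when $\otimes$ is not commutative.
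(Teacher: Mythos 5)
Your proposal is correct, and it is both more complete and more careful than the paper's own argument. For the direction \textit{(i)} $\Rightarrow$ \textit{(ii)} the underlying strategy is the same as the paper's---specialize to a tiny concrete hypergraph so that each side of the identity collapses to a single product---but your witness differs: the paper uses the two-vertex hypergraph with one hyperedge $k$ from $\{a,b\}$ to $\{a\}$ and arrays $\mathbf{v} = (v\ \ 0)$, $\Eout = (u\ \ 0)$, $\Ein = (1\ \ 1)$ (with some notational slippage between $x,y$ and $u,v$, and with entries whose zero pattern does not actually match the stated orientation of $k$), whereas you use the single self-loop with $\Eout = (1)$, $\Ein = (q)$, which is smaller, manifestly a legitimate pair of incidence arrays, and yields $u \otimes q = q \otimes u$ in one line; your explicit handling of the residual cases $q = 0$ (via the annihilator condition (III)) and $0 = 1$ is a genuine improvement in rigor over the paper, which does not discuss either. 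For the direction \textit{(ii)} $\Rightarrow$ \textit{(i)}, the paper gives no argument at all, while you supply a clean proof via the lemma $(\mathbf{A} \arrayproduct \mathbf{B})^\intercal = \mathbf{B}^\intercal \arrayproduct \mathbf{A}^\intercal$; your entrywise verification is sound, since both iterated sums range over the same middle key set $K_2$ in the same total order with the same right-associative bracketing, so termwise commutativity of $\otimes$ suffices and no property of $\oplus$ is used---a worthwhile observation given that this theorem, unlike Theorem~\ref{associativity theorem}, does not assume the conclusions of Theorem~\ref{addition commutativity theorem}. In short: same core idea for necessity, a cleaner witness, and a sufficiency argument the paper omits.
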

\begin{proof}
    Let $G$ be the directed hypergraph consisting of two vertices $a, b$ and one hyperedge $k$ from $\{a, b\}$ to $\{a\}$ as shown in Figure~\ref{f2}.1. Given $x,y\in \mathbb{V}$, define 
    \begin{equation*}
        \begin{split}
            \mathbf{v} = \kbordermatrix{\nullrowlabel & a & b \\ \nullrowlabel & v & 0 }, \quad \Eout = \kbordermatrix{\nullrowlabel & a & b \\ k \tightrowlabel & u & 0 }, \quad
            \Ein = \kbordermatrix{\nullrowlabel & a & b \\ k \tightrowlabel & 1 & 1 }.
        \end{split}
    \end{equation*}

    Assume $(\mathbf{v} \arrayproduct \Eout^\intercal) \arrayproduct \Ein = (\Ein^\intercal \arrayproduct (\Eout \arrayproduct \mathbf{v}^\intercal))^\intercal$. Using the values set for $\mathbf{v}, \Eout, \Ein$, we get $x \otimes y = y \otimes x$, i.e., $\otimes$ is commutative.
\end{proof}

\subsection{Associativity of \texorpdfstring{$\otimes$}{Multiplication}}

Prior work shows that for directed graphs, $\mathbf{A} = \Eout^\intercal \arrayproduct \Ein$ computes the adjacency array $\mathbf{A}$ of the directed graph from its incidence arrays $\Eout, \Ein$ under the same algebraic conditions as in Theorem~\ref{t2.1} \cite{jananthan2017construct}. Under those same conditions $\mathbf{v} \mapsto \mathbf{v} \arrayproduct \mathbf{A}$ computes a one-step BFS traversal \cite{shah2011some}. Although adjacency arrays are insufficient for uniquely representing hypergraphs, it stands to reason that $\mathbf{v} \mapsto \mathbf{v} \arrayproduct (\Eout^\intercal \arrayproduct \Ein)$ should give an alternate implementation of LinAlgBFS.

\begin{thm}
\label{associativity theorem}
    Assume $(\mathbb{V}, \oplus, \otimes, 0, 1)$ satisfies conditions of Theorems~\ref{t2.1} \& \ref{addition commutativity theorem}. The following are equivalent.
    \begin{enumerate}[(i)] 
		\item $(\mathbf{v} \arrayproduct \Eout^\intercal) \arrayproduct \Ein = \mathbf{v} \arrayproduct (\Eout^\intercal \arrayproduct \Ein)$ for any directed hypergraph $G$, any incidence arrays $\mathbf{E}_\mathrm{out}, \mathbf{E}_\mathrm{in}$, and vector $\mathbf{v}$. 
        \item $\otimes$ is associative. 
	\end{enumerate} 
\end{thm}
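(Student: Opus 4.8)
The plan is to establish the two implications separately: \textit{(i)} $\Rightarrow$ \textit{(ii)} is forced by a single minimal test hypergraph, while \textit{(ii)} $\Rightarrow$ \textit{(i)} is an entrywise expansion of the two array products.

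For \textit{(i)} $\Rightarrow$ \textit{(ii)}, I would take $G$ to be the directed hypergraph with one vertex $a$ and one self-looped hyperedge $k$ from $\{a\}$ to $\{a\}$. Given arbitrary $x \in \mathbb{V}$ and $y, z \in \mathbb{V} \setminus \{0\}$, set $\mathbf{v}(a) = x$, $\Eout(k,a) = y$, and $\Ein(k,a) = z$; these are legitimate incidence arrays since $a$ is the only initial and the only terminal vertex of $k$. Because there is a single vertex and a single hyperedge, every iterated $\oplus$ has just one summand, and the two computations reduce to
\[
    \big((\mathbf{v} \arrayproduct \Eout^\intercal) \arrayproduct \Ein\big)(a) = (x \otimes y) \otimes z
    \quad\text{and}\quad
    \big(\mathbf{v} \arrayproduct (\Eout^\intercal \arrayproduct \Ein)\big)(a) = x \otimes (y \otimes z).
\]
Thus \textit{(i)} forces $(x \otimes y) \otimes z = x \otimes (y \otimes z)$ for every $x$ and every nonzero $y, z$. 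The remaining cases $y = 0$ or $z = 0$ follow from the fact, guaranteed by the hypotheses (Theorem~\ref{t2.1}), that $0$ annihilates $\otimes$: both sides then collapse to $0$. Hence $\otimes$ is associative.

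For \textit{(ii)} $\Rightarrow$ \textit{(i)}, fix a directed hypergraph, incidence arrays $\Eout, \Ein$, a vector $\mathbf{v}$, and a vertex $b$, and write out both sides at $b$:
\[
    \big((\mathbf{v} \arrayproduct \Eout^\intercal) \arrayproduct \Ein\big)(b) = \bigoplus_{k \in K}\Big(\bigoplus_{a \in V} \mathbf{v}(a) \otimes \Eout^\intercal(a, k)\Big) \otimes \Ein(k, b),
\]
\[
    \big(\mathbf{v} \arrayproduct (\Eout^\intercal \arrayproduct \Ein)\big)(b) = \bigoplus_{a \in V} \mathbf{v}(a) \otimes \Big(\bigoplus_{k \in K} \Eout^\intercal(a, k) \otimes \Ein(k, b)\Big).
\]
The strategy is to bring both expressions to the common form $\bigoplus_{(a,k) \in V \times K} \mathbf{v}(a) \otimes \Eout^\intercal(a, k) \otimes \Ein(k, b)$: distributing $\otimes$ over $\oplus$ flattens each side into a double iterated sum of triple products; associativity of $\otimes$ makes the parenthesization of each triple product irrelevant; associativity and commutativity of $\oplus$ (available by the hypotheses, Theorem~\ref{addition commutativity theorem}) let the two double sums be reindexed into the same order; and the annihilation property lets us freely insert or delete the summands in which $\mathbf{v}(a)$, $\Eout^\intercal(a, k)$, or $\Ein(k, b)$ vanishes, since each such summand equals $0$. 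As $b$ was arbitrary, \textit{(i)} follows.

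The step I expect to be the main obstacle is the first move in \textit{(ii)} $\Rightarrow$ \textit{(i)}: rewriting $\big(\bigoplus_a s_a\big) \otimes t$ as $\bigoplus_a (s_a \otimes t)$ and $s \otimes \big(\bigoplus_k t_k\big)$ as $\bigoplus_k (s \otimes t_k)$. This is precisely two-sided distributivity of $\otimes$ over $\oplus$, which---unlike associativity of $\otimes$---is not listed among the conditions of Theorem~\ref{t2.1} or Theorem~\ref{addition commutativity theorem}. Because incidence arrays may carry arbitrary nonzero entries (a hyperedge can have two or more initial vertices with differing incidence values), this cannot be sidestepped by appealing to sparsity, so the delicate point is whether distributivity should be taken as part of the ambient value-set axioms, as is standard for GraphBLAS semirings, or adjoined to the hypotheses. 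Once distributivity is available, the remaining work---the reindexing of $\oplus$ and the discarding of annihilated summands---is routine, as is the entire \textit{(i)} $\Rightarrow$ \textit{(ii)} direction.
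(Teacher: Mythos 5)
Your forward direction is correct and is essentially the paper's argument in miniature. The paper uses the two-vertex hypergraph of Figure~\ref{f2}.1 with $\mathbf{v} = (u,0)$, $\Eout = (v,1)$, $\Ein = (w,0)$, which, once the zero summands are annihilated, reduces to exactly your comparison of $(u\otimes v)\otimes w$ with $u\otimes(v\otimes w)$; your one-vertex self-loop reaches the same conclusion with fewer moving parts, and you dispatch the $y=0$ or $z=0$ cases the same way the paper does.

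The reverse direction is where your proposal stops short of a proof, but the obstacle you flag is genuine and is not resolved by the paper either. The paper's entire argument for \textit{(ii)} $\Rightarrow$ \textit{(i)} is a citation of the ``general observation'' that associativity and commutativity of $\oplus$ together with associativity of $\otimes$ make $\arrayproduct$ associative; as proved in the cited reference that observation is for semirings, where distributivity of $\otimes$ over $\oplus$ is part of the structure. Distributivity is not among the conditions of Theorems~\ref{t2.1} and \ref{addition commutativity theorem}, and it cannot be derived from them: moreover, condition \textit{(i)} itself forces it. Taking the hyperedge of Figure~\ref{f2}.1 with $\Eout(k,\cdot) = (1,1)$, $\Ein(k,\cdot) = (w,0)$, $w \neq 0$, and $\mathbf{v} = (u,v)$, the left side of \textit{(i)} evaluates at $a$ to $(u\oplus v)\otimes w$ and the right side to $(u\otimes w)\oplus(v\otimes w)$, which is right-distributivity and does not follow from associativity of $\otimes$ alone. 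So the equivalence as stated really does require distributivity to be adjoined to the ambient hypotheses (or built into the value set, as for GraphBLAS semirings), and the paper's proof of this direction assumes it silently through the citation. Granting distributivity, your flatten-and-reindex plan for \textit{(ii)} $\Rightarrow$ \textit{(i)} is the standard and correct way to finish.
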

\begin{proof}
    Assume \textit{(i)} holds and let $G$ be the directed hypergraph with two vertices $a, b$, and a single hyperedge $k$ from $\{a, b\}$ to $\{a\}$ as shown in Figure~\ref{f2}.1. Given $u, v, w \in \mathbb{V} \setminus \{0\}$, define
    \begin{equation*}
        \begin{split}
            \mathbf{v} = \kbordermatrix{\nullrowlabel & a & b \\ \nullrowlabel & u & 0 }, \quad \Eout = \kbordermatrix{\nullrowlabel & a & b \\ k \tightrowlabel & v & 1 }, 
            \quad \Ein = \kbordermatrix{\nullrowlabel & a & b \\ k \tightrowlabel & w & 0 }.
        \end{split}
    \end{equation*}
    Evaluating $(\mathbf{v} \arrayproduct \Eout^\intercal) \arrayproduct \Ein$ and $\mathbf{v} \arrayproduct (\Eout^\intercal \arrayproduct \Ein)$ shows
    \begin{equation*}
        \kbordermatrix{\nullrowlabel & a & b \\ \nullrowlabel & (u \otimes v) \otimes w & 0 } = \kbordermatrix{\nullrowlabel & a & b \\ \nullrowlabel & u \otimes (v \otimes w) & 0 },
    \end{equation*}
    hence $(u \otimes v) \otimes w = u \otimes (v \otimes w)$. The cases in which $u$, $v$, or $w$ are $0$ follow from the hypothesis that $0$ is an annihilator for $\otimes$ and hence both expressions $(u \otimes v) \otimes w$ and $u \otimes (v \otimes w)$ evaluate to $0$ in those cases. This shows $\otimes$ is associative.

    The converse follows from the general observation that when $\oplus$ is both associative and commutative and $\otimes$ is associative, then $\mathbf{A} \arrayproduct (\mathbf{B} \arrayproduct \mathbf{C}) = (\mathbf{A} \arrayproduct \mathbf{B}) \arrayproduct \mathbf{C}$ for any associative arrays $\mathbf{A}, \mathbf{B}, \mathbf{C}$ \cite{kepner2018bigdata}.
\end{proof}

Worth noting is that without the prior hypotheses that $(\mathbb{V}, \oplus, \otimes, 0, 1)$ satisfy the conditions of Theorem~\ref{addition commutativity theorem}, condition \textit{(i)} in Theorem~\ref{associativity theorem} implies $\oplus$ is associative by considering the directed hypergraph $G = (\{a, b, c\}, \{(\{a, b\}, \{a, c\}), (\{c\}, \{a, b\})\})$ as shown in Figure~\ref{f2}.3 and the assignments
\begin{equation*}
    \begin{split}
        \mathbf{v} = \kbordermatrix{\nullrowlabel & a & b & c \\ \nullrowlabel & u & v & w }, \quad  \Eout = \kbordermatrix{\nullrowlabel & a & b & c \\ k_1 \tightrowlabel & 1 & 1 & 0 \\ k_2 \tightrowlabel & 0 & 0 & 1}, \\ \text{and}~ \Ein = \kbordermatrix{\nullrowlabel & a & b & c \\ k_1 \tightrowlabel & 1 & 0 & 1 \\ k_2 \tightrowlabel & 1 & 1 & 0 }.
    \end{split} 
\end{equation*}
Evaluating $(\mathbf{v} \arrayproduct \Eout^\intercal) \arrayproduct \Ein$ and $\mathbf{v} \arrayproduct (\Eout^\intercal \arrayproduct \Ein)$ shows
\begin{equation*}
    \kbordermatrix{\nullrowlabel & a & b & c \\ \nullrowlabel & (u \oplus v) \oplus w & w & u \oplus v } = \kbordermatrix{\nullrowlabel & a & b & c \\ \nullrowlabel & u \oplus (v \oplus w) & w & u \oplus v },
\end{equation*}
hence $(u \oplus v) \oplus w = u \oplus (v \oplus w)$.

\begin{figure} 
    \begin{center} 
    \includegraphics[scale=0.6]{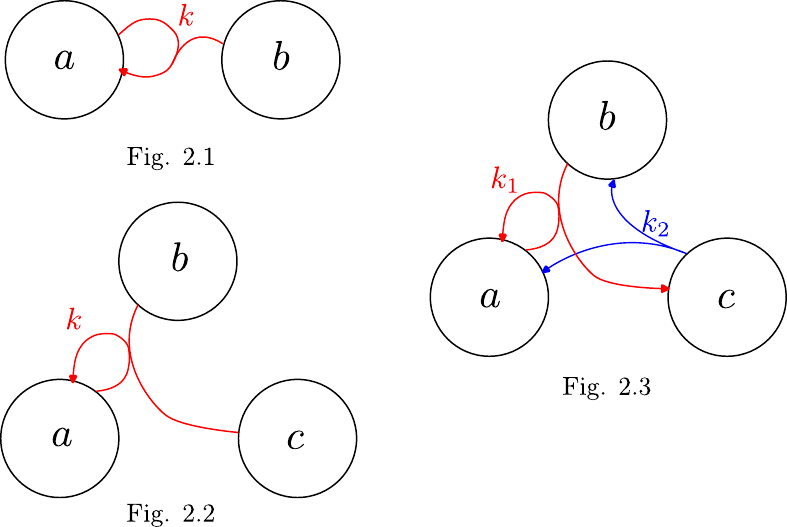}
    \caption{Explicit directed hypergraphs used to prove necessity of associativity and commutativity of $\oplus$ and $\otimes$ given independence from corresponding convention.\label{f2}} 
    \end{center} 
\end{figure}

\section{Conclusions and Future Work}

Based on the high computational performance of graph algorithms written in the language of linear algebra and the rising recognition of the importance of hypergraphs in application, more work needs to be done to better understand what algebraic properties are needed to support those graphs. 

This paper establishes algebraic conditions on the value set $(\mathbb{V}, \oplus, \otimes, 0, 1)$ under the computations $(\mathbf{v} \arrayproduct \Eout^\intercal) \arrayproduct \Ein$ correctly computes one step of a breadth-first search traversal, namely that $\mathbb{V}$ is zero-sum-free and zero-divisor-free, and $0$ is an annihilator for $\otimes$. 

In this paper, the primary focus was on directed hypergraphs. It can be shown that the algebraic requirements for directed hypergraphs---zero-sum-free, zero-divisor-free, and $0$ an annihilator for $\otimes$---apply to undirected hypergraphs as well. However, equivalences in the cases of associativity and commutativity of $\oplus$ and $\otimes$ made more essential use of both the ``directed'' and ``hyper'' aspects of the directed hypergraphs. A potential continuation of this project would be to look other types of graphs, e.g., undirected hypergraphs or directed and undirected multigraphs, and perform a similar analysis to determine which algebraic properties are specific to only a certain type of graph. This allows us to better understand the underlying algebras behind these graphs when using the linear algebraic method to compute graph algorithms, and may provide insight on what we need to investigate to further refine these algorithms.

\section*{Acknowledgements}

The authors wish to acknowledge the following individuals for their contributions and support: W. Arcand, W. Bergeron, D. Bestor, C. Birardi, B. Bond, S. Buckley, C. Byun, G. Floyd, V. Gadepally, D. Gupta, M. Houle, M. Hubbell, M. Jones, A. Klien, C. Leiserson, K. Malvey, P. Michaleas, C. Milner, S. Mohindra, L. Milechin, J. Mullen, R. Patel, S. Pentland, C. Prothmann, A. Prout, A. Reuther, A. Rosa , J. Rountree, D. Rus, M. Sherman, C. Yee.

\bibliographystyle{IEEEtran}
\bibliography{references}

\end{document}